\newtheorem{te}{Theorem}[section]
\newtheorem{os}[te]{Remark}
\newtheorem{prop}[te]{Proposition}
\numberwithin{equation}{section}
\begin{document}

    \title[Nonlinear heat conduction equations with memory]{Nonlinear heat conduction equations with memory: physical meaning and analytical
        results}

	\author{Pietro Artale Harris$^1$}
	        \address{${}^1$Dipartimento di Scienze di Base ed Applicate per l'Ingegneria, ``Sapienza'' Universit\`a di Roma.}	

	\author{Roberto Garra$^2$}
        \address{${}^2$Dipartimento di Scienze Statistiche, ``Sapienza'' Universit\`a di Roma.}

     \keywords{Nonlinear Cattaneo equation, heat equation with memory, fractional differential equations, Invariant subspace method}

    \date{\today}

    \begin{abstract}
	We study nonlinear heat conduction equations with memory effects within the framework of the 
	fractional calculus approach to the generalized Maxwell--Cattaneo law. Our main aim is to derive the governing 
	equations of heat propagation, considering both the empirical temperature--dependence of the thermal conductivity coefficient (which introduces nonlinearity) and memory effects, according to the general theory of Gurtin and Pipkin of finite velocity thermal propagation
	with memory. In this framework, we consider in detail two different approaches to the generalized Maxwell--Cattaneo law, 
	based on the application of long--tail Mittag--Leffler memory function and power law relaxation functions, leading to 
	nonlinear time--fractional telegraph and wave--type equations. We also discuss some explicit analytical results to the 
	model equations based on the generalized separating variable method and discuss their meaning in relation to some well--known results of 
	the ordinary case.   
	
	\smallskip

    \end{abstract}

    \maketitle

    \section{Introduction}

      It is well known that the classical heat equation
      arises from the energy balance law
      \begin{equation}\label{ap2}
      \rho c \frac{\partial T}{\partial t}= -\frac{\partial q}{\partial x},
      \end{equation}
      where $\rho$ and $c$ are the medium mass density and the medium specific heat respectively,
      together with the Fourier law of heat conduction
      \begin{equation}
      q = -k_T \frac{\partial T}{\partial x},
      \end{equation}
     where $k_T$ is the thermal conductivity. The obtained equation, even if widely used in different fields 
      of the applied science, has the unrealistic feature to imply an infinite speed of heat propagation. Different attempts to generalize the classical Fourier law have been considered in the literature 
      in order to avoid this unphysical paradox. 
      
      The first relevant generalization, derived by Cattaneo in \cite{cat} is based on 
      the following modification of the Fourier law of heat conduction
      \begin{equation}\label{ap4}
      \tau \frac{\partial q}{\partial t}+q = -k_T \frac{\partial T}{\partial x},
      \end{equation}
      that, after substitution in \eqref{ap2}, leads to the so--called \textit{telegraph equation} \cite{Straughan}
        \begin{equation}\label{ap3}
         \tau \frac{\partial^2 T}{\partial t^2}+ \frac{\partial T}{\partial t}= \frac{k_T}{\rho c}\frac{\partial^2 T}{\partial x^2}= D \frac{\partial^2 T}{\partial x^2}.
         \end{equation}
       The related physical picture is described by the so--called \textit{extended irreversible thermodynamics} \cite{joub}.
For completeness, we observe that \eqref{ap4} is also named in literature Maxwell--Cattaneo law as it was 
       firstly considered by Maxwell and also Cattaneo--Vernotte law, since Vernotte in \cite{ver} considered the
       paradox of infinite speed of propagation at the same time of Cattaneo.
       
       A heuristic explanation of equation \eqref{ap4} as a suitable modification of the classical heat flux law, is
       given by the idea that inertial effects should be considered in models of heat conduction, since the heat flux 
       does not depend instantaneously on the temperature gradient at one point and therefore memory effects are not negligible.
       
       In this framework, a more general analysis about the role of memory effects in heat propagation, was then discussed in the classical paper
       of Gurtin and Pipkin \cite{Gurtin}. In this paper, the relation between the flux and the gradient of the temperature field is given by
       \begin{equation}\label{gpip}
       q(x,t)= -k_T\int_0^t K(t-\tau) \left(\frac{\partial T}{\partial x}\right)(x,\tau) d\tau.
       \end{equation}
       This equation describes a general heat flux history model 
      depending by the particular choice of the relaxation kernel 
      $K(t)$.    
         	In the case of $K(t)= 1/\tau \ K_r/\theta_r \
         	\exp[-\frac{t}{\tau}]$, we recover the Maxwell--Cattaneo law.
         	
          Again in \cite{Gurtin}, it was
          shown that the linearized constitutive equation for heat
          propagation, based on equation \eqref{gpip}, is given by
          \begin{align}\label{catt}
          \nonumber &c\frac{\partial^2 T}{\partial t^2}+\beta(0) \frac{\partial T}{\partial t}+\int_0^{\infty} \beta'(s)\frac{\partial}{\partial t}
         T(x,t-s)ds\\
          &= K(0)\frac{ \partial^2 T}{\partial x^2} +\int_0^{\infty}K'(s)\frac{\partial^2}{\partial x^2} T(x,t-s)ds,
          \end{align}
          where $\beta(t)$ is the energy relaxation function. 
          Assuming that $\beta(t)\equiv 0$, this equation is related to finite speed of propagation equal to $v= \sqrt{K(0)/c}$.  
          We call the family of equations \eqref{catt}, depending by the
          particular relaxation kernel, a telegraph-type equation with 
          memory. We consider also the temperature dependence of the thermal conductivity coefficient $k_T$ in models of heat propagation with memory effects. 
          In particular we assume that (see e.g. \cite{Straughan}) 
          $$k_T(T)\sim k_0 T^{\gamma}, \quad \gamma \geq -1. $$
          This assumption typically leads to a different approach to the problem of finite velocity of heat propagation, \textit{i.e.} the application of nonlinear parabolic equation. It is possible to consider both the physical effects, namely memory and nonlinearity in the heat propagation problem, leading therefore to the nonlinear heat equation with memory      
		\begin{equation}\label{nlme}
		\rho c \frac{\partial T}{\partial t}= k_0\frac{\partial}{\partial x}
		\int_0^t K(t-\tau)\left[T^{\gamma}\left(\frac{\partial T}{\partial x}\right)\right](x,\tau) d\tau.
		\end{equation}
		From now on we will take for simplicity $k_0 = \rho = c= 1$.
		
		In the pioneering papers by Compte and Metzler \cite{com1,com2} and then in the framework of the so--called fractional thermoelasticity \cite{Pov}, different interesting 
		generalizations of the heat equation have been studied on the basis of physical and probabilistic derivations within the general theory of
		Gurtin and Pipkin of heat conduction with memory.
		
		In the present paper we consider two interesting cases, 
		corresponding to particular specializations of the relaxation kernel in equation \eqref{nlme}. The first one is based on the application 
		of a Mittag--Leffler relaxation function leading to a nonlinear generalized telegraph-type equation. The second case 
		is related to the choice of a power-law relaxation function and leads to nonlinear fractional wave-type equation.
		We are able to solve the model equations in a simple analytic form by a generalized separating variable method in some particular 
		interesting cases.
     
     \section{Nonlinear fractional telegraph equation}
    In this section we study two classes of equations belonging to the family \eqref{nlme}. The first one corresponds to a fractional Cattaneo law introduced by Compte and Metzler in \cite{com1}, presenting a long-time Mittag-Leffler relaxation function. The second one corresponds to the the choice of a power law decaying relaxation function, recently discussed by Povstenko \cite{Pov}.

     \subsection{Mittag--Leffler relaxation function}
     
     As introduced in Section 1, the Cattaneo law can be derived in the framework of the general theory of Gurtin and Pipkin \cite{Gurtin}
     by considering an exponential relaxation kernel. This corresponds to a short--tail memory relaxation function. A first heuristic generalization
     of the Cattaneo law, in the framework of the fractional calculus, is based on the replacement of the exponential function 
     with a Mittag--Leffler relaxation function, which corresponds to a long--tail memory relaxation function, according to the literature
     (see e.g. \cite{Pov}).
     Indeed it is well--known that the one--parameter Mittag--Leffler function 
     \begin{equation}
     E_{\alpha,1}(x)= \sum_{k=0}^{\infty}\frac{x^k}{\Gamma(\alpha k +1)},
     \end{equation}
     presents a power-law asymptotic slow decay (for a complete review about this topic, we refer to the recent monograph \cite{gor}).
     
	By simple calculations, Compte and Metzler \cite{com1} have shown that the generalized Cattaneo law, obtained from the following relationship
	\begin{equation}
	q(x,t)= -\int_{-\infty}^{t}E_{\nu,1}\left(-\frac{(t-\tau)^{\nu}}{\lambda^\nu}\right)\frac{\partial T}{\partial x} d\tau,
	\end{equation}     
     leads to the generalized telegraph equation 
     \begin{equation}
     \left(\frac{\partial^2}{\partial t^2}+\lambda^\nu \frac{\partial^{2-\nu}}{\partial t^{2-\nu}}\right)T(x,t)= \frac{\partial^2 T}{\partial x^2},
     \end{equation}
     where $\lambda$ plays the role of a relaxation time
     constant coefficient.
     
     Moreover a probabilistic derivation of such model in the framework of the continuous time random walk (CTRW) have been discussed. 
     The probabilistic analysis of the fractional telegraph--type equations has been developed also in \cite{bo} and \cite{e2}.
     For physical applications
     in the theory of thermoelasticity we refer to \cite{Pov}.
    Considering the temperature--dependence of the diffusion coefficient and according to the previous analysis, the following nonlinear fractional telegraph 
    equation is obtained
    \begin{equation}\label{a}
    \left(\frac{\partial^2}{\partial t^2}+\lambda^\nu \frac{\partial^{2-\nu}}{\partial t^{2-\nu}}\right)T(x,t)= \frac{\partial}{\partial x}T^{\gamma}\frac{\partial T}{\partial x}, \gamma >0.
    \end{equation} 
     This kind of equation has never been studied nor in the physical neither in the mathematical literature, despite the 
     recent interest for the applications of fractional calculus in the problems of nonlinear heat conduction (we refer for example to 
     \cite{vaz} and references therein).
     
     We underline
     the physical meaning of the model equation \eqref{a}, since two different physical effects are taken into account: (i) long-memory effects by means of a particular choice of the relaxation function in the Gurtin-Pipkin theory and (ii) nonlinearity due to the empirical dependence of the heat conduction coefficient.
     
      We underline that in many cases these two contributions are not taken into account at 
     the same time (see \textit{e.g} \cite{Straughan}). Usually, the nonlinear theory for heat conduction is considered as an alternative to the Cattaneo 
     approach to avoid the paradox of infinite speed of heat propagation (see e.g. \cite{Straughan}). 
     In the field of fractional calculus, this new physically motivated generalization of the telegraph equation opens new interesting 
     problems both from the mathematical and the more applied point of view. We observe that, in rigorous terms, 
     only in recent papers (e.g. \cite{yama}) the existence problem of the solution of semilinear time--fractional wave equations have been considered,
     while for the nonlinear telegraph equation \eqref{a} no results are present.
     Here our aim is to introduce the problem from a physical point of view and discuss some particular cases in which this equation admits explicit solutions by means of 
     the generalized separation of variables method \cite{Galaktion,polyanin}. In particular, we consider two classes
     of exact solutions admitted by \eqref{a}, generalizing some results reported in the handbook of Polyanin (\cite[3.3.4.2, pag.248-249]{polyanin}).\\
     \begin{prop}
     For any $\gamma \neq -1$, equation \eqref{a} admits a solution of the form 
     \begin{equation}\label{solll}
  	T(x,t)= (x+C_2)^{\frac{1}{1+\gamma}}(C_1 t \ E_{\nu,2}(-\lambda^\nu t^\nu)+C_2),   
     \end{equation}
     where $C_1$ and $C_2$ are arbitrary constants depending by the initial conditions.
     \end{prop}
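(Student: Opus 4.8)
The plan is to apply the generalized separation of variables method of \cite{polyanin}: I look for a solution in the factorized form $T(x,t)=X(x)\,\Theta(t)$ with $X(x)=(x+C_2)^{1/(1+\gamma)}$ and $\Theta(t)=C_1 t\,E_{\nu,2}(-\lambda^\nu t^\nu)+C_2$, and I verify that it satisfies \eqref{a}. Since the two temporal operators act only on $\Theta$ and $X$ is independent of $t$, the left-hand side of \eqref{a} becomes
\begin{equation}
\left(\frac{\partial^2}{\partial t^2}+\lambda^\nu\frac{\partial^{2-\nu}}{\partial t^{2-\nu}}\right)T
= X(x)\left(\Theta''(t)+\lambda^\nu\frac{d^{2-\nu}}{dt^{2-\nu}}\Theta(t)\right),
\end{equation}
while on the right-hand side $T^\gamma T_x=X^\gamma X'\,\Theta^{\gamma+1}$, so that
\begin{equation}
\frac{\partial}{\partial x}\left(T^{\gamma}\frac{\partial T}{\partial x}\right)=\bigl(X^\gamma X'\bigr)'\,\Theta^{\gamma+1}.
\end{equation}

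The decisive point is that the chosen spatial profile makes the nonlinear heat flux spatially constant. Indeed, for $X=(x+C_2)^{1/(1+\gamma)}$ one gets $X'=\frac{1}{1+\gamma}(x+C_2)^{-\gamma/(1+\gamma)}$, whence $X^\gamma X'=\frac{1}{1+\gamma}$ is constant (this is exactly where the hypothesis $\gamma\neq-1$ enters, guaranteeing that the exponent $1/(1+\gamma)$ makes sense). Therefore $(X^\gamma X')'\equiv0$, the entire right-hand side of \eqref{a} vanishes, and the whole problem collapses to the purely temporal linear fractional equation
\begin{equation}\label{plan-timeq}
\Theta''(t)+\lambda^\nu\frac{d^{2-\nu}}{dt^{2-\nu}}\Theta(t)=0.
\end{equation}

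To establish \eqref{plan-timeq} I would use the power-series representation $t\,E_{\nu,2}(-\lambda^\nu t^\nu)=\sum_{k=0}^{\infty}\frac{(-\lambda^\nu)^k}{\Gamma(\nu k+2)}\,t^{\nu k+1}$, first noting that the additive constant $C_2$ is killed by both operators. Differentiating twice term by term gives $\sum_{k\ge1}\frac{(-\lambda^\nu)^k}{\Gamma(\nu k)}t^{\nu k-1}$, while the fractional term, computed through $\frac{d^{2-\nu}}{dt^{2-\nu}}t^{\nu k+1}=\frac{\Gamma(\nu k+2)}{\Gamma(\nu k+\nu)}t^{\nu k+\nu-1}$ and an index shift $k\mapsto k+1$, produces the opposite series; the two then cancel term by term. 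This cancellation is precisely the eigenfunction property that singles out $E_{\nu,2}$, and it shows that \eqref{solll} solves \eqref{a} with $C_1$ and $C_2$ free, since each constant enters only through operators that annihilate it.

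I expect the main obstacle to be the careful treatment of \eqref{plan-timeq}. The term-by-term cancellation succeeds only for the Riemann--Liouville reading of $\frac{d^{2-\nu}}{dt^{2-\nu}}$, for which the lowest power $t$ is not annihilated and contributes exactly the term needed to cancel the $k=1$ contribution of $\Theta''$; under a Caputo reading that term would be lost and a spurious $t^{\nu-1}$ remainder would survive. One must also justify the term-by-term differentiation, both classical and fractional, which follows from the uniform convergence of the Mittag--Leffler series together with its differentiated series on compact subsets of $t>0$. The remaining steps---the spatial differentiation and the $\Gamma$-function bookkeeping---are entirely routine.
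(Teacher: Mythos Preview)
Your proposal is correct and follows essentially the same route as the paper: both use the separated ansatz $T(x,t)=(x+C_2)^{1/(1+\gamma)}f(t)$, observe that the nonlinear spatial operator annihilates this profile (i.e.\ $(X^\gamma X')'=0$), and reduce \eqref{a} to the linear fractional ODE $f''+\lambda^\nu\,d^{2-\nu}f/dt^{2-\nu}=0$. The only difference is cosmetic: the paper simply cites \cite[Theorem~5.2]{Kilbas} for the solution $f(t)=C_1 t\,E_{\nu,2}(-\lambda^\nu t^\nu)+C_2$, whereas you verify it directly by a term-by-term Mittag--Leffler computation and add the (pertinent) remark that the Riemann--Liouville interpretation of the fractional derivative is the one that makes the cancellation work.
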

     
    \begin{proof}
     Equation \eqref{a} admits a solution of the form 
     \begin{equation}\label{solv}
     T(x,t)= (x+C_2)^{\frac{1}{1+\gamma}} f(t).
     \end{equation}
     Indeed by simple calculations we have that the operator
     \begin{equation}\label{inv}
      F\bigg[T,\frac{\partial T}{\partial x} , \frac{\partial^2 T}{\partial x^2}\bigg]=\frac{\partial}{\partial x}T^{\gamma}\frac{\partial T}{\partial x},
     \end{equation}
     is such that
     \begin{equation}\nonumber
 	F[(x+C_2)^{\frac{1}{1+\gamma}}] = 0.
     \end{equation}
     Therefore, by substituting \eqref{solv} in \eqref{a}, we obtain 
     \begin{equation}
     \left(\frac{d^2}{dt^2}+\lambda^\nu \frac{d^{2-\nu}}{t^{2-\nu}}\right)f(t)= 0,
     \end{equation}
     whose solution is given by (see \cite{Kilbas}, Theorem 5.2, pag. 286)
     \begin{equation}\label{solv2}
     f(t) = C_1 t \ E_{\nu,2}(-\lambda^\nu t^\nu)+C_2,
     \end{equation}
    as claimed.
    \end{proof}
    
    \begin{os}
    We recall that for $\nu = 1$ the function \eqref{solv2} becomes 
    \begin{equation}
    f(t) = C_1 t \ E_{1,2}(-\lambda t)+C_2 = C_1 t \ \left(\frac{1-e^{-\lambda t}}{\lambda t}\right)+C_2 
    \end{equation}
    and therefore we recover the result reported in \cite{polyanin}.
     Observe also, that the solution \eqref{solll} corresponds to an initial datum
         $$T(x,0)= C_2(x+C_2)^{\frac{1}{1+\gamma}},$$
         whose evolution grows in time according to $f(t)$.
         We present in Fig.1 the profile of the solution in nondimensional
         variables at different times, showing the growth in time of the initial datum.
     
    \begin{figure}
                            \centering
                            \includegraphics[scale=.53]{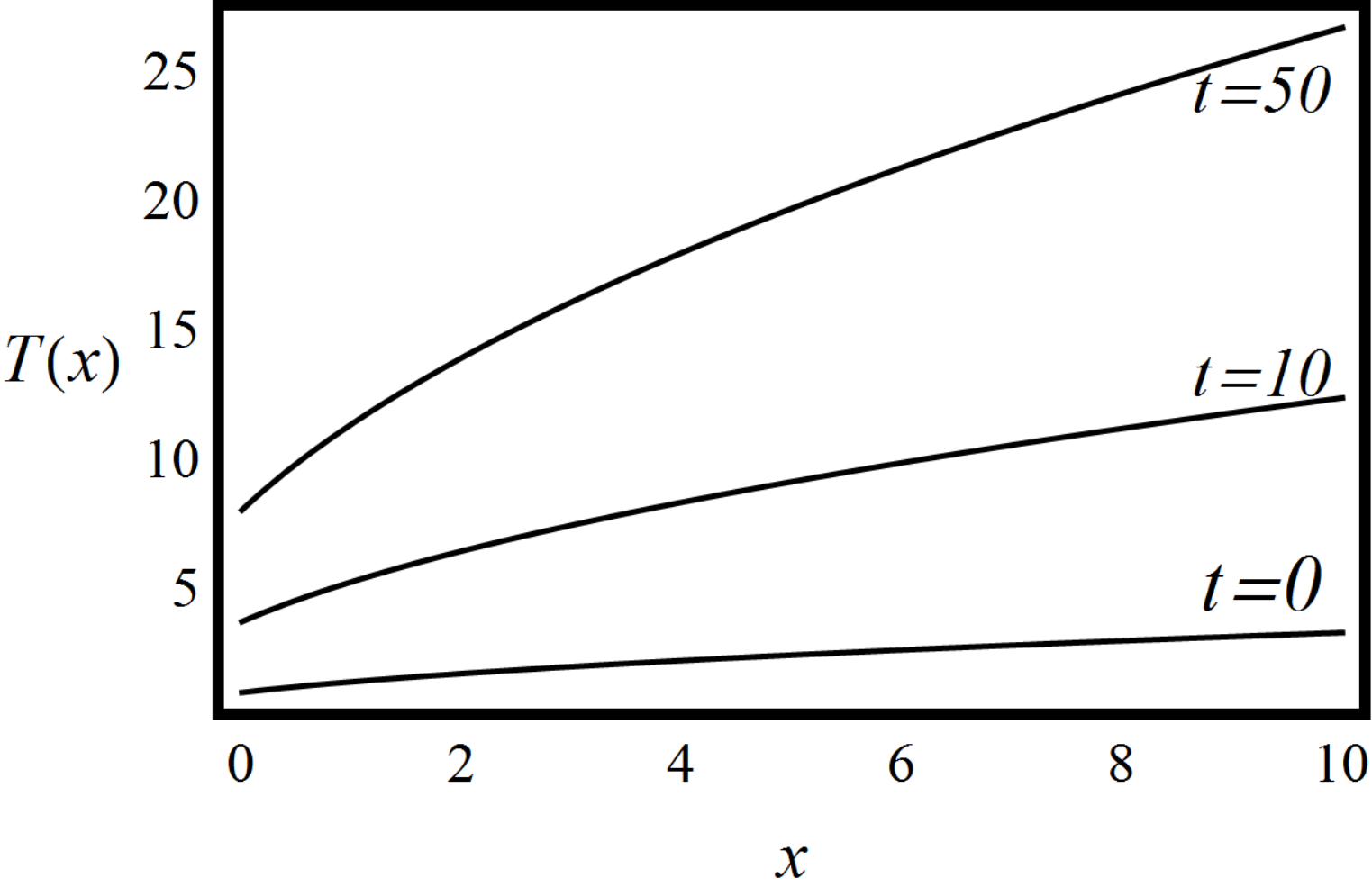}
                            \caption{Plot of the solution $T(x,t)$ in nondimensional variables for different times, taking $C_1=C_2= \lambda =\gamma=1$ and 
                            $\nu = 0.5$.} 
                            \label{figura0}
                 \end{figure} 
         
    \end{os}

     \smallskip
     
     We now present a second result, whose proof is based on a simple application of the invariant subspace method (see \cite{Galaktion,Gazizov} and the Appendix) to the nonlinear fractional equation \eqref{a}.
     \begin{prop}
   	 Equation \eqref{a} admits as a solution the function
   	 \begin{equation}
   	 T(x,t)= (x+C)^{2/\gamma} u(t), 
   	 \end{equation}
   	 where $C$ is an arbitrary constant and $u(t)$ solves the fractional 
   	 differential equation 
   	 \begin{equation}\label{equaz}
   	 \left(\frac{d^2}{dt^2}+\lambda^\nu \frac{d^{2-\nu}}{t^{2-\nu}}\right)u(t)= \frac{2(\gamma+2)}{\gamma^2}u^{\gamma+1}.
   	 \end{equation}
     \end{prop}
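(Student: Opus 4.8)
The plan is to use the invariant subspace method announced just before the statement. I would seek a separable ansatz $T(x,t)=\phi(x)\,u(t)$ in which the spatial profile $\phi$ spans a one-dimensional linear subspace left invariant by the nonlinear operator $F$ of \eqref{inv}. Whereas in the previous Proposition the profile $(x+C_2)^{1/(1+\gamma)}$ was chosen to lie in the \emph{kernel} of $F$, here I would instead look for a monomial $\phi(x)=(x+C)^p$ that $F$ maps to a scalar multiple of itself, i.e.\ an ``eigenfunction'' of the nonlinear operator.

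The first step is to pin down the exponent $p$. Writing $\phi(x)=(x+C)^p$ and computing
\[
F[\phi]=\frac{d}{dx}\left(\phi^{\gamma}\frac{d\phi}{dx}\right)=p\big(p(\gamma+1)-1\big)(x+C)^{p(\gamma+1)-2},
\]
one sees that $F[\phi]$ is proportional to $\phi$ precisely when $p(\gamma+1)-2=p$, which forces $p=2/\gamma$. This is the heart of the argument: the value $2/\gamma$ appearing in the statement is exactly the unique power for which the nonlinear spatial operator preserves the monomial. With $p=2/\gamma$ the surviving constant simplifies to $p\big(p(\gamma+1)-1\big)=\frac{2(\gamma+2)}{\gamma^2}$, so that $F[(x+C)^{2/\gamma}]=\frac{2(\gamma+2)}{\gamma^2}(x+C)^{2/\gamma}$.

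Next I would substitute the full ansatz $T(x,t)=(x+C)^{2/\gamma}u(t)$ into \eqref{a}. Since the fractional time operator on the left acts only on $t$, it factors the spatial part out and yields $(x+C)^{2/\gamma}$ times the time operator applied to $u$. On the right, the homogeneity of $F$ gives $F[(x+C)^{2/\gamma}u]=u^{\gamma+1}F[(x+C)^{2/\gamma}]$, which by the previous step equals $\frac{2(\gamma+2)}{\gamma^2}(x+C)^{2/\gamma}u^{\gamma+1}$. Both sides then share the factor $(x+C)^{2/\gamma}$, and cancelling it leaves exactly the reduced fractional ODE \eqref{equaz}.

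The computation is essentially routine, so I do not expect a serious obstacle; the only places requiring care are the bookkeeping of the exponents and of the constant $\frac{2(\gamma+2)}{\gamma^2}$, together with checking the homogeneity identity $F[\phi u]=u^{\gamma+1}F[\phi]$. The latter holds because, expanding $F[T]=\gamma T^{\gamma-1}T_x^2+T^{\gamma}T_{xx}$, each term is homogeneous of total degree $\gamma+1$ in $T$ and its $x$-derivatives, so an $x$-independent scalar factor $u$ pulls out as $u^{\gamma+1}$. Once \eqref{equaz} is obtained the statement is proved; the solvability or explicit form of $u(t)$ is a separate question about the scalar fractional ODE and is not claimed here.
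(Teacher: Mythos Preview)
Your proposal is correct and follows essentially the same approach as the paper's proof: both use the invariant subspace method, observing that $F[(x+C)^{2/\gamma}]=\frac{2(\gamma+2)}{\gamma^2}(x+C)^{2/\gamma}$ and then substituting the separable ansatz into \eqref{a} to obtain \eqref{equaz}. Your write-up is in fact more detailed than the paper's, since you explain how the exponent $p=2/\gamma$ is determined and explicitly verify the homogeneity $F[\phi u]=u^{\gamma+1}F[\phi]$, whereas the paper simply states the key identity and appeals to the Appendix.
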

     
     \begin{proof}
     The operator $F[\cdot]$ in \eqref{inv} is invariant under the function $(x+C)^{2/\gamma}$. Indeed
     $$F[(x+C)^{2/\gamma}]= \frac{2(\gamma+2)}{\gamma^2}(x+C)^{2/\gamma}.$$\\
     According to the invariant subspace method (see the Appendix), we can look for a solution of the form 
     \begin{equation}\label{solv3}
     T(x,t) = u(t) (x+C)^{2/\gamma}.
     \end{equation}
     By substituting \eqref{solv3} in \eqref{a}, we obtain the claimed result.
     \end{proof}
     The main difficulty for this second class of solutions,  is to solve exactly the nonlinear fractional equation \eqref{equaz}. This non-trivial problem can not be simply handled by analytical methods, except for the case $\gamma = -1$ that leads to a simple linear fractional equation whose solution is well-known (see \cite{Kilbas}).
     
     With the following proposition, we provide a more general result, by introducing an \textit{ad hoc} non--homogeneous term parametrizing a space-time dependent source term in the heat equation. 
     
    \begin{prop}
     Assume the source term  given by
          \begin{equation}
          g(x,t)= \frac{\nu-2}{\gamma}\left(\frac{\nu-2}{\gamma}-1\right)\frac{1}{t^2}
          \left[\lambda^\nu\frac{\Gamma(\frac{\nu-2}{\gamma}+1)}{\Gamma(\frac{\nu-2}{\gamma}+\nu-1)}\frac{1}{\frac{2}{\gamma}(1+\frac{2}{\gamma})}\right]^{1/\gamma}
               \left(\frac{x^2}{t^{2-\nu}}\right)^{1/\gamma}.
          \end{equation}
     Then, the generalized nonlinear Cattaneo equation with source 
     \begin{equation}\label{pro}
     \left(\frac{\partial^2}{\partial t^2}+\lambda^\nu \frac{\partial^{2-\nu}}{\partial t^{2-\nu}}\right)T(x,t)= \frac{\partial}{\partial x}T^{\gamma}\frac{\partial T}{\partial x}+g(x,t), \quad \gamma>0, \nu \in(0,1)
     \end{equation}
     admits as a solution
     \begin{equation}\label{merso}
     T(x,t)= \left[\lambda^\nu\frac{\Gamma(\frac{\nu-2}{\gamma}+1)}{\Gamma(\frac{\nu-2}{\gamma}+\nu-1)}\frac{1}{\frac{2}{\gamma}(1+\frac{2}{\gamma})}\right]^{1/\gamma}
     \left(\frac{x^2}{t^{2-\nu}}\right)^{1/\gamma}.
     \end{equation}
    \end{prop}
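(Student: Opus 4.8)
The plan is to verify the claim by direct substitution, exploiting the fact that the proposed solution has the separable power-law form $T(x,t)=A\,x^{p}\,t^{q}$, where I abbreviate
$$A=\left[\lambda^\nu\frac{\Gamma(\frac{\nu-2}{\gamma}+1)}{\Gamma(\frac{\nu-2}{\gamma}+\nu-1)}\frac{1}{\frac{2}{\gamma}(1+\frac{2}{\gamma})}\right]^{1/\gamma},\qquad p=\frac{2}{\gamma},\qquad q=\frac{\nu-2}{\gamma}.$$
Once $T$ is inserted, every term of \eqref{pro} is a single monomial in $x$ and $t$, so the whole verification reduces to matching powers and then matching one scalar coefficient.

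First I would evaluate the nonlinear spatial term. Since the $x$-dependence of $T$ is exactly $x^{2/\gamma}$, I can reuse the computation of the preceding proposition: the operator $F$ of \eqref{inv} sends $A x^{p}$ to $A^{\gamma+1}\frac{2}{\gamma}(1+\frac{2}{\gamma})x^{p}$, so carrying along the frozen time factor gives
$$\frac{\partial}{\partial x}T^{\gamma}\frac{\partial T}{\partial x}=A^{\gamma+1}\frac{2}{\gamma}\Bigl(1+\frac{2}{\gamma}\Bigr)\,x^{p}\,t^{\,q+\nu-2}.$$
For the left-hand side I would use that $A x^{p}$ is constant in $t$ and apply the power rule twice: the classical one, $\partial_t^2 t^{q}=q(q-1)t^{q-2}$, and the fractional one, $\partial_t^{2-\nu}t^{q}=\frac{\Gamma(q+1)}{\Gamma(q+\nu-1)}\,t^{\,q+\nu-2}$ (the standard formula, e.g. from \cite{Kilbas}). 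This produces
$$\left(\frac{\partial^2}{\partial t^2}+\lambda^\nu\frac{\partial^{2-\nu}}{\partial t^{2-\nu}}\right)T=A\,q(q-1)\,x^{p}\,t^{q-2}+\lambda^\nu A\,\frac{\Gamma(q+1)}{\Gamma(q+\nu-1)}\,x^{p}\,t^{\,q+\nu-2}.$$

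The key observation, which is what makes the source term do its job, is that the first summand above is precisely $g(x,t)$: rewriting $g$ with $q=\frac{\nu-2}{\gamma}$ gives $g=q(q-1)t^{-2}T=A\,q(q-1)\,x^{p}t^{q-2}$. Hence the chosen $g$ exactly cancels the pure second-derivative contribution, and \eqref{pro} collapses to the single identity
$$\lambda^\nu A\,\frac{\Gamma(q+1)}{\Gamma(q+\nu-1)}\,x^{p}t^{\,q+\nu-2}=A^{\gamma+1}\frac{2}{\gamma}\Bigl(1+\frac{2}{\gamma}\Bigr)\,x^{p}t^{\,q+\nu-2}.$$
The powers of $x$ and of $t$ already coincide on the two sides, so the proof finishes by cancelling $x^{p}t^{\,q+\nu-2}$ and one factor of $A$ and then solving for $A^{\gamma}$, which reproduces verbatim the bracketed constant in the statement. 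The only genuine point to check, as opposed to mere bookkeeping, is the applicability of the fractional power rule for the (non-integer, possibly negative) exponent $q$, i.e. that $\Gamma(q+\nu-1)$ is finite in the admissible range $\gamma>0,\ \nu\in(0,1)$; granting this, the remaining ``difficulty'' is purely the verification that the exponents of $x$ and $t$ match term by term, and no harder computation is involved.
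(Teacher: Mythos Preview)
Your verification is correct and follows essentially the same route as the paper: separating variables via the ansatz $T(x,t)=f(t)\,x^{2/\gamma}$ and then using the fractional power rule $\partial_t^{2-\nu}t^{\beta}=\frac{\Gamma(\beta+1)}{\Gamma(\beta+\nu-1)}t^{\beta+\nu-2}$ to match coefficients. Your presentation is in fact a bit cleaner than the paper's, since you make explicit the structural reason the source is chosen as it is---namely that $g(x,t)=q(q-1)t^{-2}T$ precisely absorbs the classical $\partial_t^2$ contribution, leaving only the fractional term to balance the nonlinearity---whereas the paper simply substitutes $f(t)=C_1t^{\beta}$ into the reduced ODE \eqref{ode} and asserts the outcome.
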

    \begin{proof}
    Let us consider the following \textit{ansatz} on the solution of the equation \eqref{pro}
    \begin{equation}\label{sol}
    T(x,t)= f(t)x^{2/\gamma}.
    \end{equation}
    This assumption is motivated by the fact that the l.h.s. term of equation \eqref{pro} (acting on the $x$--variable) is invariant (see \cite{Galaktion}) for the function $x^{2/\gamma}$ in the sense that, if we call
    \begin{equation}
    \nonumber F\bigg[T,\frac{\partial T}{\partial x} , \frac{\partial^2 T}{\partial x^2}\bigg]=\frac{\partial}{\partial x}T^{\gamma}\frac{\partial T}{\partial x}+g(x,t),  
    \end{equation}
    we have that 
    $$F[x^{2/\gamma}]=  x^{2/\gamma}\bigg[\frac{2}{\gamma}\left(1+\frac{2}{\gamma}\right)+\frac{C}{t^{\frac{2-\nu}{\gamma}}}\bigg],$$
    where
        $$C =\left[\lambda^\nu\frac{\Gamma(\frac{\nu-2}{\gamma}+1)}{\Gamma(\frac{\nu-2}{\gamma}+\nu-1)}\frac{1}{\frac{2}{\gamma}(1+\frac{2}{\gamma})}\right]^{1/\gamma}.$$
    Then, by substituting \eqref{sol} in \eqref{pro} we are able to find the unknown function $f(t)$ by solving the nonlinear ordinary fractional differential equation
    \begin{equation}\label{ode}
    \bigg[\frac{d^2}{dt^2}+\lambda^\nu\frac{d^{2-\nu}}{dt^{2-\nu}}\bigg]f(t)= \frac{2}{\gamma}\left(\frac{2+\gamma}{\gamma}\right)f^{\gamma+1}+
    C \ \frac{\nu-2}{\gamma}\left(\frac{\nu-2}{\gamma}-1\right)t^{\frac{\nu-2}{\gamma}-2}.
    \end{equation} 
    We search a solution to \eqref{ode} of the form 
    \begin{equation}\nonumber
    f(t)\sim C_1 \ t^{\beta}
    \end{equation}
    and by using the following property
    \begin{equation}
    \frac{d^\nu}{dt^\nu}t^\beta = \frac{\Gamma(\beta+1)}{\Gamma(\beta+1-\nu)}t^{\beta-\nu},
    \end{equation}
    for $\beta \in (-1,0)\bigcup (0,\infty)$ (see \cite{Kilbas}), we obtain \eqref{merso}.
    \end{proof}
    
    \begin{os}
    We observe that, by means of the invariant subspace method (see \cite{Galaktion} and Appendix for more details), it is possible to prove that equation \eqref{a} 
    for particular choices of the power $\gamma$ admits solutions in subspaces of dimension 2 or 3. In particular
    for $\gamma= 1$, equation \eqref{a} admits the subspace $W^3 = <1,x,x^{2}>$, i.e. a solution of the form $T(x,t) = f(t)x^{2}+ g(t) x+ h(t)$, where the functions $f(t), g(t)$ and $h(t)$ solve the following system of coupled nonlinear fractional differential equations
    \begin{equation}\label{siste}
    \begin{cases}
    &\displaystyle{\left(\frac{d^2}{dt^2}+\lambda^\nu \frac{d^{2-\nu}}{dt^{2-\nu}}\right)}f= 6f^2,\\
    &\displaystyle{\left(\frac{d^2}{dt^2}+\lambda^\nu \frac{d^{2-\nu}}{dt^{2-\nu}}\right)}g = 6f g,\\
    &\displaystyle{\left(\frac{d^2}{dt^2}+\lambda^\nu \frac{d^{2-\nu}}{dt^{2-\nu}}\right)}h = g^2 +2hf.
    \end{cases}
    \end{equation} 
    We stress that also in this case the system \eqref{siste} is hard to be handled by means of analytical methods.
    \end{os}

     \subsection{Power--law relaxation function}
     
     We now consider, as a second interesting case, still in the framework of the fractional calculus approach to nonlinear 
     heat propagation with memory, the case in which the relaxation kernel in \eqref{gpip} is given by
     \begin{equation}\label{pows}
     K(t)\sim \frac{t^{-\nu}}{\Gamma(\nu)}, \quad \nu \in (0,1).
     \end{equation}
     The choice \eqref{pows} corresponds to a generalized Cattaneo--law
     discussed by Povstenko in \cite{Pov}. 
     In this case we obtain the following generalized heat equation
     \begin{equation}\label{wave}
     \frac{\partial T}{\partial t}= \frac{1}{\Gamma(\nu)}\int_{0}^{t}(t-\tau)^{-\nu}\bigg[\frac{\partial}{\partial x}T^{\gamma}
     \frac{\partial T}{\partial x}\bigg](x,\tau) d\tau = J_t^\nu \bigg[\frac{\partial}{\partial x}T^{\gamma}
          \frac{\partial T}{\partial x}\bigg],
     \end{equation}
     where $J_t^\nu$ is the fractional Riemann-Liouville integral \cite{Kilbas}.
     Recalling that (see e.g. \cite{Kilbas}) 
     \begin{equation}
     \frac{\partial^\nu}{\partial t^\nu}J_t^\nu f(x,t) = f(x,t),
     \end{equation}
     and applying the fractional derivative in the sense of Caputo
     \cite{Kilbas} to both terms of the equality, we obtain
     \begin{equation}\label{waveb}
      \frac{\partial^\nu}{\partial t^\nu}\frac{\partial T}{\partial t}= \frac{\partial}{\partial x}T^{\gamma}
                \frac{\partial T}{\partial x}.
     \end{equation}
     Observe that the overall order $1+\nu$ of the time derivative belongs to $(1,2]$. This means that \eqref{waveb} is essentially a nonlinear
     super--diffusion equation \cite{metzler14}. 
      
     We firstly concentrate our attention to the case $\gamma = 1$ in \eqref{waveb}. It is well-known that the nonlinear diffusion equation
      \begin{equation}\label{bar}
         \frac{\partial T}{\partial t}= \frac{\partial}{\partial x}\left(T^m\frac{\partial T}{\partial x}\right),
         \end{equation}
         has fundamental solution 
         given by (see e.g. \cite{Straughan} pag.28 and references therein)
         \begin{equation}\label{b}
         T(x,t)= 
         \begin{cases}
         \displaystyle{\frac{1}{\lambda(t)}\left[1-\left(\frac{x}{\lambda(t) r_0}\right)^2\right]^{1/m}}, \quad &\mbox{if $|x|\leq\lambda r_0$}\\
         0, \quad  &\mbox{if $|x|>\lambda r_0$},
      	\end{cases}  
         \end{equation}
         where 
         \begin{align}
         \lambda(t)= \left(\frac{t}{t_0}\right)^{\frac{1}{2+m}},\\
         t_0=\frac{m r_0^2}{2(m+2)}, \\
         r_0= \frac{\Gamma(1/m+3/2)}{\sqrt{\pi}\Gamma(1/m+1)}.
         \end{align}
      We now observe that it is possible to built a similar solution also for the super-diffusive non-linear equation \eqref{waveb} in the case $\gamma =1$ (corresponding to $m = 1$ in \eqref{bar}).
      Note that the Barenblatt solution \eqref{b}, for $m=1$, simply becomes 
      \begin{equation}
       T(x,t)= 
               \begin{cases}
               \displaystyle{\frac{1}{\lambda(t)}\left[1-\left(\frac{x}{\lambda(t) r_0}\right)^2\right]}, \quad &\mbox{if $|x|\leq\lambda r_0$}\\
               0, \quad  &\mbox{if $|x|>\lambda r_0$},
            	\end{cases}  
      \end{equation}
  	then, on the basis of this simple structure, in terms of invariant subspaces we can affirm that the equation \eqref{bar} admits the invariant subspace $W^2:<1,x^2>$. In analogy, we search a solution of the equation \eqref{waveb} for $\gamma = 1$ in the same subspace, \textit{i.e.} a solution of the form 
  	\begin{equation}\label{bar2}
  	T(x,t) = a(t)+b(t) x^2.
  	\end{equation}	   
     Substituting \eqref{bar2} in \eqref{waveb} for $\gamma =1$, we obtain that the functions $a(t)$ and $b(t)$ should satisfy the following system of fractional differential equations
     \begin{equation}\label{so}
    \begin{cases}
     &\displaystyle{\frac{d^\nu}{dt^\nu}\frac{d}{dt}}a = 2 ab,\\
     &\displaystyle{\frac{d^\nu}{dt^\nu}\frac{d}{dt}}b = 6 b^2.
    \end{cases} 
     \end{equation}
     Unfortunately, we are not able to solve analytically system \eqref{so} and therefore it is not possible to give an explicit comparison with the result \eqref{b}. We only proved that the fractionl equation \eqref{waveb} admits a solution in the invariant subspace  $W^2=<1,x^2>$, whose time--evoluion is governed by the functions $a(t)$ and $b(t)$ satisfying the system \eqref{so}.
        
     In the more general case, \textit{i.e.} for any $\gamma\neq 0$, we have the following result
     \begin{prop}
     The nonlinear time--fractional wave--type equation \eqref{waveb} admits as a solution 
     \begin{equation}
      T(x,t)=  \left(K \ \frac{x^2}{t^{\nu+1}}\right)^{1/\gamma},
          \end{equation}
    where 
    $$K = \bigg[\frac{1}{\frac{2}{\gamma}\left(\frac{2}{\gamma}+1\right)}\frac{\Gamma(1-\frac{1+\nu}{\gamma})}{\Gamma(-\nu-\frac{1+\nu}{\gamma})}\bigg],$$
    with $0<\gamma-\nu <1$ and $\gamma> 0$.
     \end{prop}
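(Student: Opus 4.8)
The plan is to apply the generalized separation of variables (invariant subspace) method, in direct analogy with the proof given for the nonhomogeneous equation \eqref{pro}. First I would insert the ansatz $T(x,t)=f(t)\,x^{2/\gamma}$ into \eqref{waveb}. The crucial ingredient is that the spatial operator $F[\cdot]$ of \eqref{inv} leaves the monomial $x^{2/\gamma}$ invariant, namely $F[x^{2/\gamma}]=\frac{2}{\gamma}\left(\frac{2}{\gamma}+1\right)x^{2/\gamma}$, a computation already carried out above. Consequently $F[f(t)x^{2/\gamma}]=\frac{2}{\gamma}\left(\frac{2}{\gamma}+1\right)f^{\gamma+1}x^{2/\gamma}$, while the left-hand side of \eqref{waveb} applied to the ansatz equals $x^{2/\gamma}\,\frac{d^\nu}{dt^\nu}\frac{d}{dt}f$. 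The common factor $x^{2/\gamma}$ cancels, and the PDE \eqref{waveb} collapses onto the scalar nonlinear fractional ODE
\begin{equation}\label{planode}
\frac{d^\nu}{dt^\nu}\frac{d}{dt}f(t)=\frac{2}{\gamma}\left(\frac{2}{\gamma}+1\right)f^{\gamma+1}.
\end{equation}

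Next I would search for a self-similar power-law solution $f(t)=A\,t^{\beta}$. Equating the powers of $t$ generated on the two sides of \eqref{planode} — the left-hand side producing $t^{\beta-1-\nu}$ and the right-hand side $t^{\beta(\gamma+1)}$ — immediately fixes the exponent as $\beta=-(1+\nu)/\gamma$. With $\beta$ determined, I would evaluate the constant via the power rule $\frac{d^\mu}{dt^\mu}t^{\alpha}=\frac{\Gamma(\alpha+1)}{\Gamma(\alpha+1-\mu)}t^{\alpha-\mu}$, which gives $\frac{d^\nu}{dt^\nu}\frac{d}{dt}t^{\beta}=\frac{\Gamma(\beta+1)}{\Gamma(\beta-\nu)}t^{\beta-1-\nu}$. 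Matching the surviving coefficients in \eqref{planode} then leaves a single algebraic relation for $A^{\gamma}$, whose solution is $A^{\gamma}=K$, with $K$ exactly as stated once one substitutes $\beta+1=1-\frac{1+\nu}{\gamma}$ and $\beta-\nu=-\nu-\frac{1+\nu}{\gamma}$. This reconstructs $f(t)=K^{1/\gamma}t^{-(1+\nu)/\gamma}$ and hence $T(x,t)=\left(K\,x^{2}/t^{\nu+1}\right)^{1/\gamma}$, as claimed.

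The algebra above is routine; the delicate point — and the reason for the hypotheses $0<\gamma-\nu<1$ and $\gamma>0$ — is the admissibility of the fractional power rule for the relevant exponent. Since $\beta=-(1+\nu)/\gamma$, the restriction $\gamma<1+\nu$ forces $\beta<-1$, so that $t^{\beta}$ and its first derivative $t^{\beta-1}$ lie outside the naive range $\beta\in(-1,0)\cup(0,\infty)$ in which the rule was quoted. I would therefore interpret the power-rule identity through the analytic continuation of the Gamma function and check that, under $0<\gamma-\nu<1$, the arguments $1-\frac{1+\nu}{\gamma}$ and $-\nu-\frac{1+\nu}{\gamma}$ avoid the non-positive integer poles of $\Gamma$, so that $K$ is a finite, nonzero constant and the formal solution is genuinely well defined. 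Establishing this admissibility, rather than the coefficient bookkeeping, is the real obstacle of the proof.
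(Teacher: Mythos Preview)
Your proposal is correct and follows exactly the approach the paper intends: the paper's own proof is a single sentence referring back to the calculations for equation \eqref{pro}, i.e.\ the ansatz $T(x,t)=f(t)x^{2/\gamma}$, reduction to a scalar fractional ODE, and a power-law trial $f(t)=A\,t^{\beta}$ solved via the rule $\frac{d^\mu}{dt^\mu}t^\alpha=\frac{\Gamma(\alpha+1)}{\Gamma(\alpha+1-\mu)}t^{\alpha-\mu}$. Your treatment is in fact more careful than the paper's, since you explicitly flag that $\beta=-(1+\nu)/\gamma<-1$ under the stated constraints and that the power rule must therefore be understood by analytic continuation rather than in the range $(-1,0)\cup(0,\infty)$ quoted earlier; the paper only remarks that the constraint is imposed ``in order to make the constant $K$ a positive real coefficient.''
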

    The proof is essentially based on the same calculations of the case of Proposition 3.1 with the additive constraints $\gamma+\nu<1 $ in order to make the constant $K$ a positive real coefficient.

     \section{Conclusions}
     In this paper two different generalizations of the Cattaneo law have been studied within a fractional calculus approach. The focus of our analysis is that of considering both nonlinear and memory effects in the formulation of the governing equations of heat conduction. 
     The physical motivations derive from the so--called paradox of infinite speed of heat propagation. We have introduced these effects by considering two classes of equations generalizing the Cattaneo law, formerly studied in \cite{com1,com2,Pov} and the temperature--dependence of the thermal conductivity coefficient.
	In particular, we have considered two memory flux laws corresponding to a long--tail memory and a power--law decaying relaxation memory.
     We have found exact results based on 
     generalized separating variables for the model equations.
     Much work should be done about well--posedness and characterization
     of the solutions of non--linear time--fractional telegraph and wave
     equations. Indeed these two classes of non--linear equations, as far as we now, have not
     yet been considered in the literature.
     
     \section{Appendix}
     
     The fact that equation \eqref{pro} admits 
     polynomial solutions with separate variables is not surprising. In fact the same result can be obtained from the Invariant Subspace
     Method, introduced by Galaktionov \cite{Galaktion},
     which allows to solve exactly nonlinear equations by separating variables.\\
     We recall the main idea of this method: consider a scalar evolution equation
     \begin{equation}\label{pros}
     \frac{\partial u}{\partial t}= F\left[u, \frac{\partial u}{\partial
     x}, \dots\right],
     \end{equation}
     where $u=u(x,t)$ and $F[\cdot]$ is a nonlinear differential
     operator. Given $n$ linearly independent functions
     $$f_1(x), f_2(x),....,f_n(x),$$
     we call $W^n$ the $n$-dimensional linear space
     $$W^n=\langle f_1(x), ...., f_n(x)\rangle.$$
     This space is called invariant under the given operator $F[u]$, if
     $F[y]\in W_n$ for any $y\in W_n$. This means that there exist $n$
     functions $\Phi_1, \Phi_2,..., \Phi_n$ such that
     $$F[C_1f_1(x)+......C_n f_n(x)]= \Phi_1(C_1,....,C_n)f_1(x)+......+\Phi_n(C_1,....,C_n)f_n(x),$$
     where $C_1, C_2, ....., C_n$ are arbitrary constants. \\
     Once the set of functions $f_i(x)$ that form the invariant subspace
     has been determined, we can search an exact solution of \eqref{pros}
     in the invariant subspace in the form
     \begin{equation}
     u(x,t)=\sum_{i=1}^n u_i(t)f_i(x).
     \end{equation}
     where $f_i(x)\in W_n$. In this way, we arrive to a system of ODEs.
     In many cases, this problem is simpler than the original one and
     allows to find exact
     solutions by just separating variables \cite{Galaktion}.
     We refer to the monograph \cite{Galaktion} for further details and applications of this method.
     Recent applications of the invariant subspace method to find explicit solutions for nonlinear fractional differential equations
     have been discussed in different papers, for example in \cite{artale,Gazizov,fcaa,nld}.

    \end{document}